\def\ba{\begin{eqnarray}}
\def\ea{\end{eqnarray}}
\def\R{\hbox{\bf R}}
\theoremstyle{plain}
\newtheorem{theorem}{Theorem}
\newtheorem{prop}[theorem]{Proposition}
\newtheorem{lemma}{Lemma}[subsection]
\def\R{\mathbb R}
\begin{document}
\baselineskip0.25in
\title{The existence of smooth solutions in q-models}
 \author{Juliana Osorio Morales \thanks{Departamento de Matem\'aticas Luis Santal\'o (IMAS), UBA CONICET, Buenos Aires, Argentina
juli.osorio@gmail.com.} and Osvaldo P. Santill\'an \thanks{Departamento de Matem\'aticas Luis Santal\'o (IMAS), UBA CONICET, Buenos Aires, Argentina
firenzecita@hotmail.com and osantil@dm.uba.ar.}}

\date {}
\maketitle

\begin{abstract}

The q-models are scenarios that may explain the smallness of the cosmological constant \cite{volo1}-\cite{volo5}. The vacuum in these theories is presented as a self-sustainable medium and include a new degree of freedom, the q-variable, which stablish the equilibrium of the quantum vacuum. In the present work, the Cauchy formulation for these models is studied. It has been already noted that there exist some limits where these theories are described by an $F(R)$
model, which posses a well formulated Cauchy problem. This paper shows that the Cauchy problem is well posed even not reaching this limit. By use of some mathematical theorems about second order non linear systems, it is shown that these scenarios admit a smooth solution for at least a finite time when some specific type of initial conditions are imposed. Some technical conditions of \cite{ringstrom} play an important role in this discussion. 

\end{abstract}

\section{Introduction}

One of the current problems in cosmology is the explanation of the cosmic acceleration of the visible universe \cite{acelero1}-\cite{acelero3}.  The problem is that gravity is an attractive force and, therefore, deceleration will be expected instead of acceleration. 
Another unsolved problem is the discrepancy between the luminous matter of several objects in the universe and
their gravitational effects \cite{dm1}-\cite{dm2}. As there is evidence that the universe is almost flat, the current energy density should be of the order of the critical one, $\rho_c\sim 10^{-47}$GeV$^4$
\cite{dm3}. But this does not agree with the contributions corresponding to the
dynamically measured non relativistic mass density, which is approximately $(0.1-0.3)\rho_c$. 

The acceleration of the
universe expansion suggests the presence of a cosmological constant  \cite{acelero1}-\cite{acelero3}. If this were to be interpreted
as vacuum energy density, then its value would be a considerable fraction of the critical density $\rho_c$.
In addition, one  of the plausible scenarios for explaining the discrepancy between the luminous and gravitational mass density is the existence of dark matter, which is an unknown matter
sector whose contribution to the energy density compensates the difference \cite{dm1}-\cite{dm2}.

 One of theoretical problems of the expected value of the cosmological constant is that the QFT predictions of vacuum energy
 are at least 55 orders of magnitude larger than $\rho_c$ \cite{dolgo2}.  Thus, there exist two problems to be explained namely, why the energy density of the universe is so small and why it is so close to the critical one $\rho_c$. 

There exists an approach for solving the first problem, which assume the existence of an unknown matter component whose evolution screens the QFT energy density at late times. Examples of these scenarios are given in \cite{dolgin}- \cite{urbano1}.
Another type of models are the self-tuning vacuum variable scenarios \cite{volo1}. The idea behind these models is that the vacuum is a self-sustainable medium, that is, it has a definite volume even in an empty environment. These works postulate a new degree of freedom, called the q-variable, whose role is  the equilibration of the quantum vacuum. Other thermodynamical scenarios of this type were 
 considered in \cite{volo2}-\cite{volo5}. The q-scenarios inspired partially the interest in vector  vector fields adjustment mechanisms, or even tensor ones.  Some vector models capable to adjust vacuum energy to a very low value 
were presented  several years ago in \cite{dolgo1}-\cite{dolgo2}. These models spontaneously break the Lorentz symmetry, and can be considered as particular cases of more general models considered by Bjorken \cite{bjorken1}-\cite{bjorken2}. A first obstacle in the original formulation  \cite{dolgo1}-\cite{dolgo2} is that  the effective Newton constant $G_N$
obtains an unacceptable numerical value \cite{rubak}. Furthermore, they strongly modified the dispersion relation of gravitational waves and introduce longitudinal components wide beyond the experimental accuracy \cite{rubak}. However, there exist new scenarios which apparently overcome this problem \cite{emelo0}-\cite{emelo4}. It was suggested recently that these models admit an plausible Newtonian limit and give rise to reasonable gravitational waves in \cite{scorna}.

The present work is inspired in the q-models described above, but is not focused in the cosmological constant side. Instead, attention is paid to their Cauchy formulation. The aim is to show that, when some suitable initial conditions are formulated on a Cauchy surface  and that the evolution is globally hyperbolic, the solution exists for a proper time $\tau>0$ and is smooth (continuous with all its derivatives continuous). It will be shown that  Cauchy problem is well formulated on the Einstein frame.

The structure of the present work is as follows. In section 2 the main equations of the model are derived. These equations are presented in the Einstein frame in section 3. In section 4 it is shown that the resulting system takes the form considered in \cite{foures}, and it can be shown that there exist a $C^5$ solution when the initial conditions are of $C^4$ type. The proof of the existence of an smooth solution is given in section 5 and the appendix. In section 5, some theorems due to Ringstrom \cite{ringstrom} are employed to show that the resulting system admits solutions with an specific condition, namely, x-compactness. This condition is not satisfyied for a Lorenzian metric in a globally hyperbolic space time. For this reason, this solution should be interpreted as a local one. However, in the appendix it is shown that it is possible to glue the local solutions of x-support found to a global one. It is important to remark that these techniques were used for GR with an scalar field by Ringstrom \cite{ringstrom}, however the scalar field Ringstrom considers has canonical kinetic energy. Instead, in our formulation in the Einstein frame the resulting theory possess two scalar fields with non canonical kinetic terms. For this reason we analysed carefully the mathematical structure that leads to the Ringstrom results and were able to apply them even when the kinetic terms acquire the non canonical form. Finally, in section 6 the obtained results and further perspectives are discussed.

\section{The equations defining the model}

The present section follows closely the original references \cite{volo1}-\cite{volo5}.
The model to be considered contains several conserved microscopic variables
$q^{(a)}$, for $a=1$, $\dots$ , $n$, with their corresponding chemical
potentials $\mu^{(a)}$. The variables
 $q^{(a)}$ can be represented by a four-form field $F^{(a)}_{\mu\nu\rho\sigma}$.
Define the scalar quantity
$$
(\phi_a)^2 = -  \frac{1}{24}\,
F^{(a)}_{\mu\nu\rho\sigma}\, F^{(a)\mu\nu\rho\sigma},\qquad
F^{(a)}_{\mu\nu\rho\sigma}=
\nabla^{\phantom{(a)}}_{[\mu}\!\!A^{(a)}_{\nu\rho\sigma]}\,.
$$
The matter field of the theory is assumed to be an scalar field $\psi(x)$.
The action of the four-form fields $F^{(a)}(x)$, the matter field $\psi(x)$,
and the gravitational field $g_{\mu\nu}(x)$ is given by
\begin{equation}
\label{qaction}
S[A^{(a)}, g,\psi]=
-\int_{\mathbb{R}^4} \,d^4x\, \sqrt{-g}
\, \left[
K(\phi_a) R   + \epsilon(\phi_a,\psi)
+ \frac{1}{2}\,\partial_\mu \psi\,\partial^\mu \psi
\right]\,.
\end{equation}
Here $ \epsilon(\phi_a,\psi)$ at the moment is an unspecified interaction. In addition 
$K(\phi_a)$ represents a coupling between the fields $F^a$ and the curvature $R$ of the space time.
This coupling does not depend on the matter field $\psi(x)$.
Starting with this action, the following equations for $F^a$ and $\psi$ are obtained
$$
\nabla_\mu \left[\frac{\sqrt{-g}\,F^{(a)\mu\nu\rho\sigma}}{\phi_a}
\left(  \frac{\partial\epsilon}{\partial \phi_a}
       +  R \,\frac{\partial K}{\partial \phi_a}\right)
                 \right]=0,
$$                 

\begin{equation}\label{fieldeqs}
\square\psi -  \frac{\partial\epsilon}{\partial\psi} =0.
\end{equation}
Here $\square=g^{\mu\nu}\nabla_\mu\nabla_\nu$ is the standard laplacian in four dimensions. 
The equations describing the gravitational field $g_{\mu\nu}$ are given by
\begin{equation}\label{einstein1}
2K
\left( R_{\mu\nu}-\frac{1}{2}\,R\,g_{\mu\nu}\right)
+R\, g_{\mu\nu} \sum_{a=1}^n\, \phi_a\,\frac{\partial K}{\partial \phi_a}
+2
\Big(  \nabla_\mu\nabla_\nu - g_{\mu\nu}\,\square \Big)K
-\widetilde{\epsilon}(\phi_a,\psi)\, g_{\mu\nu}
+T^\text{M}_{\mu\nu} =0,
\end{equation}
where the following effective interaction
$$
\widetilde{\epsilon}(\phi_a,\psi)\equiv \epsilon(\phi_a,\psi)
-\sum_{b=1}^{n} \phi_{(b)}\,\frac{\partial\epsilon}{\partial \phi_{(b)}}  
$$
and the following scalar-field energy-momentum tensor
$$
T^\text{M}_{\mu\nu} =
\partial_\mu \psi\,\partial_\nu \psi
- \frac{1}{2}\,g_{\mu\nu}\,\partial_\rho \psi\,\partial^\rho \psi,
$$ 
were introduced. These equations are of the Einstein type when the coupling $K(\phi_a)$ is replaced by a simple constant, otherwise they are not. But it will be shown in the next section that there exists a conformal transformation which renders the system to one of the Einstein type, with two scalar fiels with non canonical kinetic terms.

From the definition of the quantities $F^a$ it follows that the Maxwell equations may be expressed as follows
\begin{equation}\label{maxwell}
\partial_\mu
\left(   \frac{\partial\epsilon}{\partial \phi_a}
       + R\,\frac{\partial K}{\partial \phi_a}\right)=0\,.
\end{equation}
These are $4n$ equations with solutions
$$
     \frac{\partial\epsilon}{\partial \phi_a}
   + R \,\frac{\partial K}{\partial \phi_a}
 =\mu^{(a)},
$$
where the $\mu^{(a)}$ are $n$ integration constants, interpreted as chemical potentials in the original literature \cite{volo1}-\cite{volo5}.
After eliminating $\partial K/\partial \phi_a$,
one finds from the generalized Einstein equations that
\begin{equation}\label{einstein2}
-2K\Big( R_{\mu\nu}-\frac{R}{2}g_{\mu\nu} \Big)
-2\Big(  \nabla_\mu\nabla_\nu - g_{\mu\nu}\, \square\Big)\, K
+\Big(\epsilon-\sum_{a=1}^n \mu^{(a)} \phi_a \Big)\, g_{\mu\nu}
=T^\text{M}_{\mu\nu}\,.
\end{equation}
These equations may be obtained from the following effective action
\begin{equation}\label{qaction2}
S_\text{eff}[A^{(a)},\mu^{(a)},g,\psi]=
-\int_{\mathbb{R}^4} \,d^4x\, \sqrt{-g}\, \left(
 K\, R  + \epsilon-\sum_{a=1}^n \mu^{(a)} \phi_a
+ \frac{1}{2}\,\partial_\mu \psi\,\partial^\mu \psi\right)\,.
\end{equation}
The  $\mu^{(a)} \phi_a$ terms in this action do  not contribute to the equations of motion.
This follows from the fact that
$$
\int_{\mathbb{R}^4} \,d^4x\; \sqrt{|g|}\, \mu^{(a)}\, \phi_a =
-\frac{\mu^{(a)}}{24} ~e^{\kappa\lambda\mu\nu}
\int_{\mathbb{R}^4} \,d^4x\; F^{(a)}_{\kappa\lambda\mu\nu} \,.
$$
The constant $\mu^{(a)}$ is seen to play the role
of a Lagrange multiplier
related to the conservation of the vacuum charge $\phi_a$ \cite{volo1}-\cite{volo5}.

The potentially large microscopic energy density $\epsilon(\phi_a,\psi)$
in the original action
has been replaced by the vacuum energy density
enters the effective action namely,
$$
 \rho_\text{V}\equiv \epsilon(\phi_a,\psi)-\sum_{a=1}^n\mu^{(a)} \phi_a \,.
$$ 
This density may be considerably smaller than the bare vacuum energy, and for this reasons these models are of interest
in the context of cancellation or adjustment of the cosmological constant.

  It was shown in the original literature that, in certain limit, this model is effectively described by an $F(R)$
  theory. It was showed in the literature that $F(R)$ models posses a well posed Cauchy problem \cite{vignolo1}-\cite{vignolo2}.
  The aim of the present work is to show that q-models posses a well posed Cauchy problem even not when the $F(R)$
  limit is reached. Several mathematical results about quasi linear hyperbolic systems will be of importance in establishing these results.

\section{The model in the Einstein frame}

Experience with the $F(R)$ models of gravity suggest that the best way to analyse the Cauchy problem for these theories is to go to the Einstein frame. This is obtained by following  conformal transformation $\Omega^{2}=\frac{2K}{M_{p}^{2}}$, with $K=K(\phi_{1},\dots\phi_{n})$ the coupling of the curvature $R$ to the fields. After this conformal transformation the Lagrangian corresponding to the action (\ref{qaction2}) takes the form
$$
\mathcal L=\frac{M_{p}^{2}}{2}R-\frac{3}{2}g^{\mu\nu}\nabla_{\mu}\ln(\frac{2K}{M_{p}^{2}})\nabla_{\nu}\ln(\frac{2K}{M_{p}^{2}})\frac{M_{p}^{2}}{2}-\frac{M_{p}^{2}}{4K}\partial_{\mu}\psi\partial ^{\mu}\psi-\frac{M_{p}^{4}}{4K^{2}}(\epsilon-\mu^{(a)}\phi_{a}),
$$
with $ a=1,\dots,n$. By making the further field redefinition
$$
\eta=\frac{\sqrt{3}M_{p}}{2}\ln(\frac{2K}{M_{p}^{2}}),\qquad \longleftrightarrow \qquad K=\frac{M_{p}^{2}}{2}\exp(\frac{2\eta}{\sqrt{3}M_{p}}),
$$
the last lagrangian may be expressed as follows
\begin{equation}\label{qaction3}
\mathcal L=\frac{M_{p}^{2}}{2}R-\frac{1}{2}\partial_{\mu}\eta\partial^{\mu}\eta-\frac{1}{2}\exp(-\frac{2\eta}{\sqrt{3}M_{p}})\partial_{\mu}\psi\partial^{\mu}\psi-\exp(\frac{4\eta}{\sqrt{3}M_{p}})(\epsilon-\mu^{(a)}\phi_{a}).
\end{equation}
This model corresponds to two scalar fields $\eta$ and $\psi$, one of them with a non canonical kinetic term. The kinetic terms of both scalar fields can be combined into a two dimensional sigma model target space with metric $g_{ij}=$diag$(1, \exp(-\frac{2\eta}{\sqrt{3}M_{p}}))$ with $i=1,2$. The function 
\begin{equation}\label{efectivo}
U_{eff}=\exp(\frac{4\eta}{\sqrt{3}M_{p}})(\epsilon-\mu^{(a)}\phi_{a}),
\end{equation}
represents the effective potential for the scalar fields.

There is a one to one correspondence between the fields $K$ and $\eta$. Both fields are functions of all the $n$ auxiliary fields $\phi_{a}$. On the other hand one may consider one of the fields, say $\phi_{n}$, as a function of $\phi_{1},\dots,\phi_{n-1},\eta$. This may be possible only locally, by assuming that the coupling $K$ is a  function of the fields $\phi_a$ for which the implicit function theorem applies. A favourable possibility may be when one of the fields, say $\phi_n$ is related to the others and to $K$ by a relation with a single branch. If this situation is not realized for these fields, care should be taken concerning the following discussion.

 If the field redefinition of the previous paragraph is possible, then the constraints of the model 
$$
\frac{\delta\mathcal L}{\delta \phi_{a}}=0, \ \  \ a=1,\dots,n-1
$$
are equivalent to the following system
\begin{equation}\label{vinculo}
\frac{\partial \epsilon}{\partial\phi_{a}}+\frac{\partial\epsilon}{\partial\phi_{n}}\frac{\partial\phi_{n}}{\partial\phi_{a}}=\mu^{a}.
\end{equation}
Note that now there are $n-1$ constraints, since $\phi_n$ is now locally a function of $\eta$ and this last field is dynamical in the new reference frame. 

Concerning the dynamical fields, the equation of motion $\frac{\delta\mathcal L}{\delta \eta}=\nabla_{\mu}(\frac{\delta \mathcal L}{\delta\partial_{\mu}\eta})$ is equivalent to
\begin{equation}\label{eq for eta}
-\square\eta=\frac{1}{\sqrt{3}M_{p}}\exp{(-\frac{2\eta}{\sqrt{3}M_{p}})}\partial_{\mu}\psi\partial^{\mu}\psi+\frac{4}{\sqrt{3}M_{p}}\exp(-\frac{4\eta}{\sqrt{3}M_{p}})(\epsilon-\mu^{(a)}\phi_{a})
\end{equation}
\begin{equation}\label{eq for eta2}
-\exp(-\frac{4\eta}{\sqrt{3}M_{p}})(\frac{\partial\epsilon}{\partial\eta}-\mu^{(n)}\frac{\partial\phi_{n}}{\partial\eta}).
\end{equation}

On the other hand, the equations $\frac{\delta\mathcal L}{\delta \psi}=\nabla_{\mu}(\frac{\delta \mathcal L}{\delta\partial_{\mu}\psi})$ are
\begin{equation}\label{eq for psi}
\square\psi-\frac{2}{\sqrt{3}M_{p}}\partial_{\mu}\eta\partial^{\mu}\psi=\exp(-\frac{2\eta}{\sqrt{3}M_{p}})\frac{\partial\epsilon}{\partial\psi}.
\end{equation}

Finally, the variation with respect to the metric gives us the Einstein's equations
$$
R_{\mu\nu}-\frac{1}{2}g_{\mu\nu}R=G_{N}\left[\partial_{\mu}\eta\partial_{\nu}\eta+\exp(-\frac{2\eta}{\sqrt{3}M_{p}})\partial_{\mu}\psi\partial_{\nu}\psi-\frac{1}{2}g_{\mu\nu}g^{\alpha\beta}\partial_{\alpha}\eta\partial_{\beta}\eta\right.
$$
\begin{equation}\label{einstein}
\left.-\frac{1}{2}g_{\mu\nu}g^{\alpha\beta}\exp(-\frac{2\eta}{\sqrt{3}M_{p}})\partial_{\alpha}\psi\partial_{\beta}\psi-\exp(-\frac{4\eta}{\sqrt{3}M_{p}})g_{\mu\nu}(\epsilon-\mu^{a}\phi_{a})\right]
\end{equation}
The energy-momentum tensor of the matter fields can be read from the last expression, the result is
$$
T_{\mu\nu}=2\left(\frac{1}{2}\partial_{\mu}\eta\partial_{\nu}\eta+\frac{1}{2}\exp(-\frac{2\eta}{\sqrt{3}M_{p}})\partial_{\mu}\psi\partial_{\nu}\psi\right)
$$
$$
-g_{\mu\nu}\left(\frac{1}{2}\partial_{\alpha}\eta\partial^{\alpha}\eta+\frac{1}{2}\exp(-\frac{2\eta}{\sqrt{3}M_{p}})\partial_{\alpha}\psi\partial^{\alpha}\psi+\exp(-\frac{4\eta}{\sqrt{3}M_{p}})(\epsilon-\mu^{a}\phi_{a})\right).
$$
Its trace is 
$$
T=-G_{N}\left[g^{pq}\partial_{p}\eta\partial_{q}\eta+\exp(-\frac{2\eta}{\sqrt{3}M_{p}})g^{pq}\partial_{p}\psi\partial_{q}\psi+4\exp(-\frac{4\eta}{\sqrt{3}M_{p}})(\epsilon-\mu^{a}\phi_{a})\right]
$$
In these terms the Einstein equation may be expressed equivalently as $R_{\mu\nu}=T_{\mu\nu}-\frac{1}{2}g_{\mu\nu}T$. Explicitly
\begin{equation}\label{einstein2}
R_{\mu\nu}=G_{N}\left[\partial_{\mu}\eta\partial_{\nu}\eta+\exp(-\frac{2\eta}{\sqrt{3}M_{p}})\partial_{\mu}\psi\partial_{\nu}\psi+\exp(-\frac{4\eta}{\sqrt{3}M_{p}})g_{\mu\nu}(\epsilon-\mu^{a}\phi_{a})\right].
\end{equation}
The main task is to present some existence theorems related to the derived system of equations (\ref{vinculo})-(\ref{einstein2}).

\section{The equations of the model as a quasi linear system}
\subsection{The existence of $C^4$ solutions}
In this section the previous system will be expressed in the form of a quasi linear hyperbolic system. The advantage of doing this is that several results about these type of systems are known in the literature. A quasi linear hyperbolic system is one of the form
\begin{equation}\label{choquet}
A^{\mu\nu}(x, t, u_i)\frac{\partial u_q}{\partial x^\mu \partial x^\nu}=f_q(u_i,\partial u_i),
\end{equation}
where $u_q$ with $q=1,..,n$ constitute the $n$-unknowns. Here the matrix $A^{pq}$ is the same for all the equations $q=1,..n$ and it is of normal hyperbolic type, that is
$A_{44}\leq  0$ and $A_{ij}x^i x^j$ is a positive definite form, with the latin indices indicating spatial directions. 

There are several steps required in order to convert the system of equations of the previous section into one of the form (\ref{choquet}). For this, it is mandatory to choose a coordinate system. One choice that it is convenient for this purpose is the harmonic gauge $\square x^\mu=0$, which implies that the quantity $F^{i}=g^{pq}\Gamma_{pq}^{i}=0$.  In this gauge, the Ricci tensor for a generic metric $g_{\mu\nu}$ can be written as
\begin{equation}\label{desco}
R^{\mu\nu}=-\frac{1}{2}g^{\alpha\beta}\partial_{\alpha}\partial_{\beta}g^{\mu\nu}+Q^{\mu\nu}(g,\partial g),
\end{equation}
where the quantity
$$
Q^{\mu\nu}=g^{\alpha\beta}[\Gamma_{\alpha\gamma}^{\mu}\partial_{\beta} g^{\nu\gamma}+\Gamma_{\alpha\gamma}^{\nu}\partial_{\beta} g^{\mu\gamma}-2\Gamma_{\alpha\beta}^{\gamma}\partial_{\gamma} g^{\nu\mu}],
$$
has been introduced. That means that the Ricci tensor in these coordinates becomes a quasi-diagonal second-order operator for the components of $g$. In these terms the Einstein equation (\ref{einstein2}) may be expressed as
\begin{equation}\label{be}
g^{\alpha\beta}\partial_{\alpha}\partial_{\beta}g_{\mu\nu}=B_{\mu\nu}(g,\eta,\psi,\phi_{a},\partial\eta,\partial \psi),
\end{equation}
with $B_{\mu\nu}$ given explicitly by
$$
B_{\mu\nu}=G_{N}\left[\partial_{\mu}\eta\partial_{\nu}\eta+\exp(-\frac{2\eta}{\sqrt{3}M_{p}})\partial_{\mu}\psi\partial_{\nu}\psi+\exp(-\frac{4\eta}{\sqrt{3}M_{p}})g_{\mu\nu}(\epsilon-\mu^{a}\phi_{a})\right]-Q_{\mu\nu}(g,\partial g).
$$
On the other hand, by taking into account that 
$$
\square\psi=g^{\mu\nu}\partial_\mu \partial_\nu \psi+\Gamma^\alpha \partial_\alpha \psi,\qquad \Gamma^\alpha=\frac{1}{\sqrt{|g|}}\frac{\partial}{\partial x^\beta}(\sqrt{|g|}g^{\alpha\beta}),
$$
and the analogous formulas for $\eta$, it follows that the equations (\ref{eq for eta}), (\ref{eq for psi}), (\ref{vinculo}) reduce to the following system of second order partial differential equations
\begin{align}\label{hyper system}
g^{\alpha\beta}\partial_{\alpha}\partial_{\beta}g_{\mu\nu}=&B_{\mu\nu}(g,\eta,\psi,\phi_{a},\partial g, \partial\eta,\partial \psi),\notag\\
g^{\alpha\beta}\partial_{\alpha}\partial_{\beta}\eta=&C(g,\eta,\psi,\phi_{a}, \partial g, \partial\eta,\partial \psi),\\
g^{\alpha\beta}\partial_{\alpha}\partial_{\beta}\psi=&D(g,\eta,\psi,\phi_{a},\partial g, \partial\eta,\partial \psi).
\notag
\end{align}
Here $B_{\mu\nu}$ was defined above and 
\begin{align}
C=&-\Gamma^\alpha \partial_\alpha \psi-\frac{1}{\sqrt{3}M_{p}}\exp{(-\frac{2\eta}{\sqrt{3}M_{p}})}\partial_{\mu}\psi\partial^{\mu}\psi-\frac{4}{\sqrt{3}M_{p}}\exp(-\frac{4\eta}{\sqrt{3}M_{p}})(\epsilon-\mu^{(a)}\phi_{a})\\ \notag
&-\exp(-\frac{4\eta}{\sqrt{3}M_{p}})(\frac{\partial\epsilon}{\partial\eta}-\mu^{(n)}\frac{\partial\phi_{n}}{\partial\eta}),\\
D=&-\Gamma^\alpha \partial_\alpha \psi+\frac{2}{\sqrt{3}M_{p}}\partial_{\mu}\eta\partial^{\mu}\psi+\exp(-\frac{2\eta}{\sqrt{3}M_{p}})\frac{\partial\epsilon}{\partial\psi}.
\end{align}
This should be supplemented by the $n-1$ constraint equations
$$
\frac{\partial \epsilon}{\partial\phi_{a}}+\frac{\partial\epsilon}{\partial\phi_{n}}\frac{\partial\phi_{n}}{\partial\phi_{a}}=\mu^{a},\qquad a=1,..,n-1,
$$
and thus $\phi_a=\phi_a(\eta)$. By assuming that the auxiliary fields are eliminated as functions of $\eta$, one may define
the unknowns  $u_q$=($g_{\mu\nu}$, $\psi$, $\eta$) and $f_{q}$=($B_{\mu\nu}$, $C$, $D$). In these terms the last system can expressed as
$$
g^{\mu\nu}\frac{\partial u_q}{\partial x^\mu \partial x^\nu}=f_q(u_l,\partial u_l).
$$
This is of the form of an hyperbolic quasi linear second order system (\ref{choquet}), which is what we were looking for.

The systems of the form (\ref{choquet}) were studied in \cite{leray} and in the context of pure GR in \cite{foures}. From these references it is inferred that there exists some conditions
for which the evolution is well defined. First, one chooses an achronal surface $S$ with normal $n$ for which the initial data is settled. By use of the synchronous coordinate system 
$$
g=-dt^2+g_{ij}(t, x^i) dx^i dx^j,
$$
the initial surface can be defined by the simple equation $t=0$. The initial values will be denoted as $\phi_i=W_i$ and $\chi_i=\partial_t W_i$. By definition it is seen that both $\phi_i$ and $\psi_i$ are quantities defined on $S$. The first assumption is  that, in a domain $D$, in the initial surface $S$  defined by $|x^i-x^i_0|\leq d$ with $d$ a constant,
the functions $\phi_i$ and $\chi_i$ are differentiable up to fifth and fourth order. These functions are supposed to satisfy the Lipshitz condition $|\phi_i(x)-\phi_i(x')|\leq M |x-x'|$ and $|\chi_i(x)-\chi_i(x')|\leq M |x-x'|$ in all their arguments. Furthermore, for the values defined by 
$$
|W_i-\phi_i|\leq l, \qquad |\partial_t W_i-\chi_i|\leq l,\qquad \bigg|\frac{\partial W_i}{\partial x^j}-\frac{\partial \phi_i}{\partial x^j}\bigg|\leq l,
$$ 
in the domain 
$$
|x^i-x^i_0|\leq d,\qquad |t|\leq \epsilon,
$$
it is assumed that $g_{00}<0$ and $g_{ij}\xi^i\xi^j>0$ and that both $g_{\mu\nu}$ and $f_i$ have derivatives up to fourth orders, continuous and bounded, and satisfying the Lipschitz condition. Under these assumptions there exists an unique solution of the system of differential equations, with continuous and bounded derivatives up to fourth order, in a region 
$$
|x^i-x^i_0|\leq d,\qquad |t|\leq \eta(x').
$$ 
Note that this result does not insure that the degree of regularity of the initial condition is preserved for the evolution, since 
it ensures that the fields of the model are $C^4$ while the initial condition is assumed to be $C^5$. 

It should be emphasized that the Lipschitz condition is equivalent to a restriction of the coupling $\epsilon(\psi, \eta)$. But since we are going to prove the existence of smooth solutions below, we postpone the analysis of these restrictions till later on.
\subsection{The use of harmonic coordinates}
There is still a further aspect to be analysed. In deriving the existence result of the previous paragraph, the harmonic gauge $F^\mu=0$ has been employed at $t=0$ and it was assumed that it holds during the evolution of the resulting space time. When this gauge is taken into account, the resulting coupled Einstein system becomes a quasi linear hyperbolic one, and a local solution exits which is differentiable up to fourth order. However, this does not imply that the gauge $F^\mu=0$ will be satisfied during the evolution at $t>0$, even though if this condition holds at $t=0$. If this inconsistency appears, then the solution of (\ref{hyper system}) is not a solution of the q-theory. Such solution would be clearly unphysical. Thus, the evolution of the quantity $F^\mu$ should be analysed separately. For this, recall that the Ricci tensor $R^{\mu\nu}$ corresponding to $g_{\mu\nu}$ is explicitly
\begin{equation}\label{desco2}
R^{\mu\nu}=-\frac{1}{2}g^{\alpha\beta}\partial_{\alpha}\partial_{\beta}g^{\mu\nu}+Q^{\mu\nu}(g,\partial g)+\frac{1}{2}(g^{\beta\mu}\partial_\beta F^\nu+g^{\beta\nu}\partial_\beta F^\mu).
\end{equation} 
It is just when the harmonic gauge $F^\mu=0$ is imposed that the last expression reduces to (\ref{desco}), which is the expression employed to derive the system (\ref{hyper system}). Define
$$
R_F^{\mu\nu}=-\frac{1}{2}g^{\alpha\beta}\partial_{\alpha}\partial_{\beta}g^{\mu\nu}+Q^{\mu\nu}(g,\partial g).
$$
Then $R_F^{\mu\nu}=R^{\mu\nu}$ if the harmonic gauge $F^\mu=0$ holds. In general, the Einstein equations are equivalent to
$$
R_{F}^{\mu\nu}-\frac{1}{2}g^{\mu\nu}R_F-T^{\mu\nu}=\frac{1}{2}(g^{\mu\alpha}\partial_\alpha F^\nu+g^{\nu\alpha}\partial_\alpha F^\mu-g^{\mu\nu}\partial_\alpha F^\alpha),
$$
while the system (\ref{hyper system}) solved above implies that
\begin{equation}\label{uche}
R_{F}^{\mu\nu}-\frac{1}{2}g^{\mu\nu}R_F-T^{\mu\nu}=0.
\end{equation}
Thus, there is an inconsistency unless $F^\mu=0$ for all times where the solution exists. One tool for proving that $F^\mu=0$ for $t>0$ is the identity
$$
\nabla^\mu (R_{\mu\nu}-\frac{1}{2}g_{\mu\nu}R-T_{\mu\nu})=0,
$$
which is satisfied for a true Einstein solution. This, together (\ref{uche}) with the last identity implies that $F^\mu$ satisfies the following equation
$$
g^{\mu\nu}\partial_{\mu}\partial_{\nu}F^\alpha+A^{\alpha\beta}_\gamma \partial_\beta F^\gamma=0,
$$
for the true solution. Here $A^{\alpha\beta}_\gamma$ are some quantities depending on the metric tensor and its derivatives. The important point is that this system is also of the form
(\ref{choquet}). Now, the hamiltonian constraint $R^{0\mu}-\frac{1}{2}g^{0\mu}R-T^{0\mu}=0$ on the initial surface implies that $\partial_t F^\mu=0$ at $t=0$. This, together with  the uniqueness property for the systems (\ref{choquet}) and the  condition $F^\mu=0$ at $t=0$ imply that $F^\mu=0$ at the times $t\geq0$ where the solution of (\ref{hyper system}) exists. Thus, no inconsistency appears and the solution characterized in this section is a physical one. This is a very important point about the harmonic gauge.

\section{Local theorems on existence}

There exist a technique for reducing second order systems to first order one, which may be easier to deal with  \cite{Taylor}-\cite{courant}.
These techniques were applied for studying asymptoticall flat solutions in GR in \cite{marsden} for single scalar fields and other situations in \cite{friedrich}-\cite{friedrich2}. In addition, the Cauchy problem for Hordeski theories was analysed in \cite{reall} when the gravitational field is weak. The approach of \cite{friedrich}-\cite{friedrich2} requires to use an vierbein formalism  for the metric, and it is likely that these methods may be applied here. 
But we will not pursue these methods here. Instead we will consider the techniques for second order systems given in \cite{ringstrom}, which deals directly with the second order system.

The general form of the system (\ref{hyper system})  is the following
\begin{equation}\label{chic}
g^{\mu\nu}(x,t,u)\partial_\mu\partial_\nu u(x,t)=f(x,t,u, \partial_\alpha u).
\end{equation}
Here $u$=($g_{\mu\nu}$, $\psi$, $\eta$) represent the unknowns and $f$=($B_{\mu\nu}$, $C$, $D$) represents the non linearity. The quantities $g^{\mu\nu}(x,t, u)$
are the inverse of the metric tensor $g_{\mu\nu}$, thus, $g^{\mu\nu}( g_{\alpha\beta})$.
Denote the initial conditions by
\begin{equation}\label{chic 2}
u(x,T_0)=U_0,\qquad \partial_t u(x, T_0)=U_1.
\end{equation}
Under certain specific circumstances, it can be proved that the system (\ref{chic}) has a local solution. It is convenient at this stage to introduce the vector
$$
\xi=(g_{\mu\nu}, \partial_\alpha g_{\mu\nu},\eta, \psi,\partial_\mu \psi, \partial_\mu \eta).
$$
The introduction of this vector function facilitates the introduction of some relevant definitions and the statement of the proposition given below.

First consider the map $g_{\mu\nu}\in C^{\infty}(\R^{nN+2N+n+1},L_n)$, where $L_n$ denotes the space of canonical
$(n+1)\times (n+1)$ Lorentz matrices. Assume that these quantities satisfy
$$
|\partial^\alpha g_{\mu\nu}(x,t, \xi)|\leq h_{I, \alpha}(|\xi|),
$$
where ($x$ ,$t$) are local coordinates on $\R^{n+1}$ and $\xi$ parametrize the coordinates of $\R^{nN+2N}$. Here $I=[T_1, T_2]$ is any compact time interval and $h_{I, \alpha}: \R\to \R$ are continuous increasing functions for every  multi index $\alpha=$($\alpha_1$,..,$\alpha_{nN+2N+n+1})$. Suppose that for any compact interval $I$ there are constants $a_i\geq 0$ with $i=1,2,3$ such that
$$
g_{00}\leq -a_1,\qquad \det g_{ij}\geq a_2, \qquad \sum^n_{(\mu,\nu)=0}|g_{\mu\nu}|\leq a_3.
$$
The quantities satisfying the last condition are known as $C_{n, a}$ metrics, and the metrics satisfying all of the aformentioned assumptions are known as $C^\infty$ $N$, $n$ admissible metrics. Furthermore, for the  {\it non-linearity} $f$ is assumed that   
\begin{equation}\label{chicas}
|\partial^\alpha f(x,t, \xi)|\leq \bar{h}_{I, \alpha}(|\xi|),
\end{equation}
with $\bar{h}_{I, \alpha}(|\xi|)$ functions of the same type as the  $h_{I, \alpha}(|\xi|)$ above, and the time interval $I$ is also compact. In addition $f(x, t, \xi)$ is such that for each compact interval $I$,
there exist a compact set $K\subset \mathbb \R^3$ such that $f(x, t, 0)=0$ for any $x$ outside $K$ and $t \in I$. Such functions are known as locally of $x$-compact support. In these terms the following proposition may be stated, as in chapter 9 of the reference \cite{ringstrom}.

\begin{prop} Under conditions stated above, let $U_0, U_1 \in C^{\infty}(\R^n, \R^N)$  and $T_0\in\R$.  Then there exist two times $T_1$ and $T_2$ such that $T_1< T_0<T_2$ for which there is a unique $C^\infty$ solution $u$ of the system \eqref{chic} and \eqref{chic 2}. This solution is of  $x$-compact support.
\end{prop}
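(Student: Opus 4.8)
The plan is to prove the proposition by the classical Picard-type iteration for quasilinear wave equations combined with Sobolev energy estimates, in the spirit of Chapter~9 of \cite{ringstrom}, working directly with the second-order system as that reference does. First I would fix a smooth reference $\bar u$ on a time slab $\mathbb{R}^n\times I$, $I\ni T_0$, with $\bar u(\cdot,T_0)=U_0$ and $\partial_t\bar u(\cdot,T_0)=U_1$, chosen to agree with a constant outside a fixed compact $x$-set, and then run the iteration among functions that differ from $\bar u$ only on a compact region. Set $u_{(0)}=\bar u$ and define $u_{(k+1)}$ inductively as the solution of the \emph{linear} second-order equation
\[
g^{\mu\nu}\!\big(x,t,u_{(k)}\big)\,\partial_\mu\partial_\nu u_{(k+1)}=f\!\big(x,t,u_{(k)},\partial_\alpha u_{(k)}\big),\qquad u_{(k+1)}(\cdot,T_0)=U_0,\quad \partial_t u_{(k+1)}(\cdot,T_0)=U_1 .
\]
Solvability of this linear step is the first ingredient: when the coefficients are smooth and normally hyperbolic — which the $C_{n,a}$ and admissibility hypotheses guarantee, namely $g_{00}\le -a_1$, $\det g_{ij}\ge a_2$, $\sum|g_{\mu\nu}|\le a_3$ together with $|\partial^\alpha g_{\mu\nu}|\le h_{I,\alpha}(|\xi|)$ — and the source is smooth and of $x$-compact support, the standard energy identity for $g^{\mu\nu}\partial_\mu\partial_\nu$ plus a Galerkin approximation produce a unique smooth $u_{(k+1)}$ on the slab.

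The heart of the argument is a uniform a priori estimate. For a Sobolev index $s>n/2+1$, let $E_s^{(k)}(t)$ denote the sum of the squared $H^s(\mathbb{R}^n)$ norms of $u_{(k)}-\bar u$ and of $\partial_t(u_{(k)}-\bar u)$. Differentiating in $t$, integrating by parts using the normal-hyperbolic structure of $g^{\mu\nu}$, and bounding the commutator and source terms with Moser-type product and composition inequalities, one reaches a differential inequality of the schematic form $\tfrac{d}{dt}E_s^{(k+1)}(t)\le \Lambda\big(\sup_t E_s^{(k)}\big)\big(1+E_s^{(k+1)}(t)\big)$, with $\Lambda$ a continuous increasing function assembled from the majorants $h_{I,\alpha}$ and $\bar h_{I,\alpha}$. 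A continuity/Gr\"onwall argument then allows one to fix, once and for all, a short interval $[T_1,T_2]\ni T_0$ and a radius $\mathcal R$ — depending only on the data, on $\bar u$, and on the functions $h,\bar h$ — so that inductively $\sup_{[T_1,T_2]}E_s^{(k)}\le\mathcal R$ for all $k$; the choice must simultaneously keep each $u_{(k)}$ inside the region where $g^{\mu\nu}(x,t,u_{(k)})$ stays normally hyperbolic, so the scheme does not degenerate. I expect this step — obtaining a time of existence uniform in $k$ while preserving hyperbolicity of the evolving principal part — to be the main obstacle, and it is exactly where the admissibility conditions on $g$ and the $x$-compactness of $f$ are used.

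With the uniform bound in hand I would obtain convergence in the usual way: $u_{(k+1)}-u_{(k)}$ solves a linear equation whose coefficient perturbations and source differences are controlled by $u_{(k)}-u_{(k-1)}$ in $H^{s-1}$, so an energy estimate at that lower level shows $\{u_{(k)}\}$ is Cauchy in $C([T_1,T_2],H^{s-1})\cap C^1([T_1,T_2],H^{s-2})$; interpolating against the uniform $H^s$ bound yields a limit $u\in C([T_1,T_2],H^s)\cap C^1([T_1,T_2],H^{s-1})$ solving \eqref{chic}--\eqref{chic 2}. Smoothness I would recover by bootstrapping: since $U_0,U_1\in C^\infty$ and $g^{\mu\nu}$, $f$ are $C^\infty$ with the stated majorants, re-running the energy estimate at every higher level $s'>s$ gives $u\in C^\infty$. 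Finite speed of propagation for $g^{\mu\nu}\partial_\mu\partial_\nu$ — a truncated-cone energy estimate using normal hyperbolicity — together with $f(\cdot,t,0)=0$ off $K$ and the compatible choice of $\bar u$ then gives the $x$-compact support of $u$ on $[T_1,T_2]$. Uniqueness I would deduce from the same difference estimate applied to two solutions with identical data: the corresponding energy vanishes at $T_0$ and obeys $\tfrac{d}{dt}E_s\le\Lambda E_s$, hence is identically zero by Gr\"onwall. The only genuinely new bookkeeping relative to \cite{ringstrom} is to check that the specific $g^{\mu\nu}$ and $f=(B_{\mu\nu},C,D)$ of \eqref{hyper system}, with their non-canonical factors $\exp(-2\eta/\sqrt{3}M_p)$ and $\exp(-4\eta/\sqrt{3}M_p)$, satisfy the bound \eqref{chicas} and the admissibility hypotheses — which they do, being smooth functions of $(g,\partial g,\eta,\psi,\partial\eta,\partial\psi)$ once the constraint \eqref{vinculo} has been used to write $\phi_a=\phi_a(\eta)$.
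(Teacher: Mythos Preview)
The paper does not actually prove Proposition~1: it is quoted verbatim as a result from chapter~9 of \cite{ringstrom}, and no argument is supplied in the paper itself. Your sketch is a faithful outline of the proof Ringstr\"om gives there --- Picard iteration on the frozen-coefficient linear problem, uniform $H^s$ energy bounds on a short slab via Moser/Gr\"onwall, contraction in a lower norm, bootstrap to $C^\infty$, finite propagation speed for the $x$-compact support, and uniqueness from the difference estimate --- so in that sense your approach coincides with what the paper relies on.

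One remark: your final paragraph, about checking that the specific nonlinearities $B_{\mu\nu},C,D$ of \eqref{hyper system} satisfy \eqref{chicas} and the $x$-compact support condition, is not part of the proof of Proposition~1 at all. That verification is exactly what the paper carries out \emph{after} stating the proposition, including the shift $\psi\to\psi-\psi_0$, $\eta\to\eta-\eta_0$ to a minimum of $U_{\mathrm{eff}}$ so that $C(0)=D(0)=0$. It belongs to the application of the proposition to the $q$-model, not to its proof, so you should separate it out.
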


It should be emphasized that the $x$-compact support is a rather technical one. Its importance is due to the fact that a function $u:R^{n+1}\to R^m$ can be viewed as an element of $C^l[R, H^k(n, m)]$ for every value of $l$ and $k$. This plays an important role in the proof of the proposition, as it can be seen by reading the chapter 8 and 9 of \cite{ringstrom}.

Given this result, the first task is to check if the system (\ref{hyper system}) describing the q-model is of the form of Proposition 1. First of all, the quantities $B_{\mu\nu}$, $C$ and $D$ are playing the role of the quantity $f(x, t, \xi)$ of the proposition. Thus one should check if these are of $x$-compact support, which is one of the assumptions. In order to check the x-compactness consider for instance the quantity $C$ in (\ref{hyper system}). It is given explicitly by
$$
C(\xi)=C(g_{\mu\nu}, \partial_\alpha g_{\mu\nu},\eta, \psi,\partial_\mu \psi, \partial_\mu \eta)=-\Gamma^\alpha \partial_\alpha \psi+\frac{2}{\sqrt{3}M_{p}}\partial_{\mu}\eta\partial^{\mu}\psi+\exp(-\frac{2\eta}{\sqrt{3}M_{p}})\frac{\partial\epsilon}{\partial\psi}.
$$ 
The quantity $C$ does not depend explicitly on space time coordinates $(x,t)$, the dependence is implicit due to the vector function $\xi$. On the other hand, if $\xi=(0,..,0)$, it is clear that 
$$
C(0,..,0)=\frac{\partial\epsilon}{\partial\psi}\bigg|_{\psi,\eta=0}\neq 0.
$$
The same happens with the quantity $D$, which is explicitly 
$$
D=-\Gamma^\alpha \partial_\alpha \psi-\frac{1}{\sqrt{3}M_{p}}\exp{(-\frac{2\eta}{\sqrt{3}M_{p}})}\partial_{\mu}\psi\partial^{\mu}\psi-\frac{4}{\sqrt{3}M_{p}}\exp(-\frac{4\eta}{\sqrt{3}M_{p}})(\epsilon-\mu^{(a)}\phi_{a})
$$
$$
-\exp(-\frac{4\eta}{\sqrt{3}M_{p}})(\frac{\partial\epsilon}{\partial\eta}-\mu^{(n)}\frac{\partial\phi_{n}}{\partial\eta}).
$$
From here it is directly deduced that 
$$
D(0,..,0)=-\bigg[\frac{4}{\sqrt{3}M_{p}}(\epsilon-\mu^{(a)}\phi_{a})
+(\frac{\partial\epsilon}{\partial\eta}-\mu^{(n)}\frac{\partial\phi_{n}}{\partial\eta})\bigg]\bigg|_{\psi,\eta=0}\neq 0.
$$
Therefore, it seems that the $x$-support condition is spoiled in our system. However, this reasoning should be taken with care, since one can redefine the fields by making a shift $\psi\to\psi-\psi_0$, and $\eta\to\eta-\eta_0$. In particular, there exist a choice $\psi_0$, $\eta_0$ such that the quantities given above evaluated at these field values are zero. These fields are the minima of the effective potential $U_{eff}$ described in (\ref{efectivo}). If one makes the redefinition $\psi\to\psi-\psi_0$, $\eta\to\eta-\eta_0$ with the fields $\psi_0$, $\eta_0$ global minima of  (\ref{efectivo}), then the condition
$$
C(0,..,0)= D(0,..,0)=0,
$$
is satisfied. On the other and, the terms in the definition of the quantities $B_{\mu\nu}$ in (\ref{be}) are zero. These quantities are explicitly
$$
B_{\mu\nu}=G_{N}\left[\partial_{\mu}\eta\partial_{\nu}\eta+\exp(-\frac{2\eta}{\sqrt{3}M_{p}})\partial_{\mu}\psi\partial_{\nu}\psi+\exp(-\frac{4\eta}{\sqrt{3}M_{p}})g_{\mu\nu}(\epsilon-\mu^{a}\phi_{a})\right]-Q_{\mu\nu}(g,\partial g).
$$
Note that the potentially non zero term is just $(\epsilon-\mu^{a}\phi_{a})$, and it is multiplied by $g_{\mu\nu}$. The condition $\xi=(0,..,0)$ obviously implies that $g_{\mu\nu}=0$. Thus, the defining quantities $C$, $D$ and $B_{\mu\nu}$ are all of $x$-compact support when the field redefinition is performed. 

A further task is to check that (\ref{chicas}) also hold. This requires to take derivatives of any order of $C$, $B_{\mu\nu}$ and $D$. This is cumbersome but straightforward. The process of taking derivatives will result in derivatives the function $\epsilon(\psi, \eta)$ and the fields $\phi_a(\eta)$. Thus, the constraint will be satisfied if 
the function $\epsilon(\psi, \eta)$ and the fields $\phi_a(\eta)$ are reasonable, for instance, when they are continuous, smooth and do not posses vertical asymptotes at finite values of the fields $\psi$ and $\eta$. Thus, under more or less generic circumstances, this constraint will take place in the q-models.

There is however one condition that is surely violated. The aim of solving  equations (\ref{chic}) is to construct a Lorentz space-time metric $g_{\mu\nu}(x,t)$, which is never of $x$-compact support. This is not in agreement with the propositions assumptions. However, the previous proposition allows the construction of the metric $g_{\mu\nu}$ locally, and a global metric may be obtained by  a suitable gluing process. This procedure is sketched in the appendix, but full details may be found in \cite{ringstrom}.

\section{Discussion}
In the present work the Cauchy formulation of q-theories was studied in detail. It was shown the existence of a preferred frame, the Einstein frame, where the Cauchy problem is well formulated. The resulting system of equations, in the harmonic gauge, is 
a quasi-linear hyperbolic one. By use of some modern theorems about non linear second order systems we were able to prove the existence of global smooth solutions, by assuming that the space time manifold is globally hyperbolic. It should be emphasized that there exists a limit for which these theories are described by an $F(R)$ theory, for which the Cauchy problem is well defined. The contribution of this paper is that this problem is well posed even without taking this limit. Further aspects such as the existence of an maximal extension for a given solution, the Cauchy problem for non hyperbolic space times or the appearance of singularities will be considered in a separate publication.

\section*{Acknowledgments}
Both authors are supported by CONICET, Argentina. O.P.S is supported by the Beca Externa Jovenes Investigadores of CONICET. O.P.S warmly acknowledge the Steklov Mathematical Institute of the Russian Academy of Sciences in Moscow, were part of this work has been done, for their hospitality.

\appendix
\section{The existence of smooth solutions}
In the present work the equation (\ref{chic}) played a fundamental role. The left hand side of (\ref{chic}) involves the inverse metric $g^{\mu\nu}$, which is a function of the metric $g_{\mu\nu}$ to be solved. The problem is that the obtained solution $g_{\mu\nu}$, $\psi$ and $\eta$ are of x-compact support. A true space metric is not expected to be of this form. In the book \cite{ringstrom} a procedure in order to avoid this problem was presented. The idea is to multiply the equations and the initial conditions by certain suitable functions, which makes the system of compact x-support. After this, by a suitable gluing process, an smooth space time metric which is not of compact support may be constructed. Here we give some more or less detailed account of this procedure, applied to the present case, but further details may be found in the chapters 9 and 14 of the original reference \cite{ringstrom}.

\subsection{The local form of the metric}
The first task is to define an achronal surface where the initial conditions will be settled.
Recall that a given an space time  $(M, g_{\mu\nu})$ there exists the so called synchronous reference system for which the metric may be expressed as follows
$$
g=-dt^2+g_{ij}(t,x) dx^i dx^j.
$$
The coordinate $t$ represents the proper time and the spatial metric $g_{ij}(x_i,t)$  depends on $t$ as a parameter.
This system of coordinates exists locally, that is, it exists in a subset of the space time of the form $U\times I\subseteq M$ with $I=[0, t_1)$ and $t_1>0$. Consider an achronal initial Cauchy surface $\Sigma$, in such a way that the metric is regular on this surface. Given a subset $U\subseteq \Sigma$ there exist a neighbourhood $O$ with synchronous coordinates $(x_i,t)$ such that the subset $U$ is given by the surface $t=0$. Then, one may adapt the arguments given in chapter 9 of \cite{ringstrom} to the present situation as follows.

 Consider a subset $V\subseteq U$ such that its closure $\overline{V}\subseteq U$. Take a function $\widetilde{g}_{00}$ such that its range is bounded in $[-2,-1/4]$ and such that $\widetilde{g}_{00}=g_{00}$ when the value of $g_{00}$ is in the range  $[-3/2,-1/2]$. In addition, define $\widetilde{g}_{0i}$ such that its range is bounded in $[-2, 2]$ and such that $\widetilde{g}_{0i}=g_{0i}$ when the value of $g_{0i}$ is in the range  $[-1,1]$. There is nothing special about this choice of intervals, and a continuum of other choices are possible. The important point is however that the interval on which $\widetilde{g}_{0i}=g_{0i}$ should contain $0$. Moreover, the range of $\widetilde{g}_{0i}$ should contain the interval on which $\widetilde{g}_{0i}=g_{0i}$, with a margin. Finally consider an open set $R$ of symmetric $3\times 3$  matrices such that the values of $g_{ij}(x)$ with $x\in \overline{V}$ are in $R$ and the closure of $R$ in the set of $3\times 3$ matrices is compact and included in the set of positive defined ones. 
Then, one defines $\widetilde{g}_{ij}=g_{ij}$ when the value of $g_{ij}$ is in $U$ and it is assumed that $\widetilde{g}_{ij}$ has a positive lower bound and a positive upper one. Furthermore, it is assumed that the derivatives of $\widetilde{g}_{\mu\nu}$ with respect to the metric is are of compact support. Then replace $g^{\mu\nu}$ by $\widetilde{g}^{\mu\nu}$ in the system (\ref{chic}).  Also replace
the quantity $f^\mu$ by $c f^\mu$ with $c\in C_0^{\infty}[(-1,1)\times U]$ such that $c$ takes values $[-1/2, 1/2]\times \overline{V}$. The system becomes
The general form of the system (\ref{hyper system})  is the following
\begin{equation}\label{chics}
\widetilde{g}^{\mu\nu}(x,t,u)\partial_\mu\partial_\nu u(x,t)=c f^\mu(x,t,u, \partial_\alpha u).
\end{equation}
In order to apply the Proposition 1 of the text, one must modify the initial conditions by multiplying them by a function $h(x_i)$ which is $C^\infty$ and of compact support, and such that $h(q)=1$ for $q\in V$. 
\begin{equation}\label{chics 2}
u(x,T_0)=h U_0,\qquad \partial_t u(x, T_0)=h U_1.
\end{equation}
In these terms the Proposition 1 applies and one obtains a local solution. This implies that, for a given point $p$ in $\Sigma$, there exists an open neighbourhood $O$ such that there exist a solution for which $g_{\mu\nu}=\widetilde{g}_{\mu\nu}$ and $h=1$. This is the local form of the metric we were looking for, and it is of $x$-compact support.

 \subsection{Gluing the local metric to a global one}

\subsubsection{Preliminary lemmas}

The main task now is to glue the local solution described in the previous paragraph to a global one. But before doing that, it is necessary to state some results about sequence of points. Note that situation described above corresponds to a globally hyperbolic space time which admits an smooth Cauchy hypersurface $\Sigma$ and there exist a metric defined in an open set $U\subseteq M$. 

Recall that the $J^{-}(p)$ is the causal past of the point $p$, which is composed for all the points $x$ that they causally precede $p$, that is
$$
J^{-}(p)=\{x|\;\;\; x<<p\}.
$$
Analogous definition holds for $J^{+}(p)$. The chronological past and future of $p$, namely $I^{\pm}(p)$, is defined by changing the word causally by chronologically in the previous definition.  For a subset $S$ of $M$ one defines
$$
J^{\pm}(S)=\cup_{x\in S} J^{\pm}(x),
$$
and the analogous hold for $I^{\pm}(S)$. The future Cauchy development of $S$, $D^{+}(S)$  is the set of all points $x$ for which every past directed inextendible causal curve through $x$ intersects $S$ at least once. Similarly for the past Cauchy development. The Cauchy development is the union of the future and past Cauchy developments. An space is globally hyperbolic if there exists a surface $\Sigma$ such that $D(\Sigma)=D^+(\Sigma)\cup D^{-}(\Sigma)=M$. The surface $\Sigma$ is known as a Cauchy surface of $M$ and if there is one, there is a continuum of them.
In these terms, the following two lemmas apply.

\begin{lemma}\label{lema1} Given a Cauchy surface $\Sigma$ in a globally hyperbolic space time $(M, g)$, denote its Cauchy development by $D(\Sigma)=D^+(\Sigma)\cup D^{-}(\Sigma)=M$. For any  point $p\in$ Int $D(\Sigma)-I^{-}(\Sigma)$ the set $J^{-}(p)\cap J^{+}(\Sigma)$ is compact. 
\end{lemma}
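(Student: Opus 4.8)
The plan is to reduce the compactness of $J^{-}(p)\cap J^{+}(\Sigma)$ to that of the ``spatial slice'' $K:=\Sigma\cap J^{-}(p)$, and then to prove the latter by a limit--curve argument. First I would record the standard causal--theoretic facts available here: since $\Sigma$ is a Cauchy surface, $D(\Sigma)=M$, so $\mathrm{Int}\,D(\Sigma)=M$, and there is a Cauchy time function $t:M\to\R$ with $\Sigma=t^{-1}(0)$, strictly increasing along future--directed causal curves, for which $J^{\pm}(\Sigma)=\{{\pm}t\ge0\}$, $I^{\pm}(\Sigma)=\{{\pm}t>0\}$ and $D^{\pm}(\Sigma)=J^{\pm}(\Sigma)$. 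Thus the hypothesis $p\in\mathrm{Int}\,D(\Sigma)-I^{-}(\Sigma)$ just says $t(p)\ge0$, i.e.\ $p\in D^{+}(\Sigma)$; also $\Sigma$, being Cauchy, is acausal, $J^{-}(p)$ is closed (global hyperbolicity implies causal simplicity), and $J^{+}(K_1)\cap J^{-}(K_2)$ is compact whenever $K_1,K_2$ are compact.

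For the reduction: if $q\in J^{-}(p)\cap J^{+}(\Sigma)$, there is $r\in\Sigma$ with $r\le q\le p$, whence $r\in\Sigma\cap J^{-}(p)=K$ and $q\in J^{+}(K)\cap J^{-}(p)$; therefore $J^{-}(p)\cap J^{+}(\Sigma)\subseteq J^{+}(K)\cap J^{-}(p)$. Granting that $K$ is compact, $J^{+}(K)\cap J^{-}(p)$ is compact, and since $J^{-}(p)\cap J^{+}(\Sigma)$ is closed (intersection of two closed sets) it is a closed subset of a compact set, hence compact.

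It remains to show $K=\Sigma\cap J^{-}(p)$ is compact; this is the heart of the matter. The set $K$ is closed, so I would prove sequential compactness and, crucially, rule out escape to infinity along $\Sigma$. Given $r_n\in K$, choose a future--directed causal curve from $r_n$ to $p$, reverse it and prolong it to a past--directed, past--inextendible causal curve $\gamma_n$ starting at $p$ and meeting $\Sigma$ exactly at $r_n$ (exactly once, because $\Sigma$ is Cauchy). Parametrizing each $\gamma_n$ by $h$--arclength for a fixed complete auxiliary Riemannian metric $h$ with $\gamma_n(0)=p$, the limit curve theorem yields a subsequence converging, uniformly on compact parameter sets, to a past--directed, past--inextendible causal curve $\gamma$ with $\gamma(0)=p$. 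Since $p\in D^{+}(\Sigma)$, $\gamma$ meets $\Sigma$ at some $\gamma(s_0)$, $s_0<\infty$; as $t$ is strictly decreasing along $\gamma$ one has $t\circ\gamma>0$ on $[0,s_0)$ and $t\circ\gamma<0$ on $(s_0,\infty)$, and uniform convergence then forces the $\Sigma$--crossing parameters of the subsequence to converge to $s_0$, so $r_{n_k}\to\gamma(s_0)\in K$. This gives the desired convergent subsequence.

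The real obstacle is precisely this last step: one cannot simply confine the $r_n$ to a compact region, since the time slab $t^{-1}([0,t(p)])$ between $\Sigma$ and $p$ is non--compact. The mechanism that compensates for this is the limit curve theorem combined with completeness of the auxiliary metric and with the defining property $p\in D^{+}(\Sigma)$ of the Cauchy development; one has to verify with some care that the $\Sigma$--crossing parameters of the approximating curves converge to that of the limit curve. The remaining ingredients — the reduction to $K$, closedness of $J^{\pm}$ of points, and compactness of causal diamonds between compact sets — are standard and may, if desired, be quoted directly from \cite{ringstrom}.
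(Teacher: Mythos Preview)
Your argument is correct. Note, however, that the paper does not actually give a proof of this lemma: it only remarks that the statement is intuitive in Minkowski space and that the general case ``requires to understand the infinite dimensional space of causal curves $C(\Sigma, p)$ connecting $\Sigma$ with $p$ \dots\ and, in particular, to show that it is compact,'' referring the reader to Wald and O'Neill for details.

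Your route is therefore genuinely different from the one the paper points to. The Wald/O'Neill approach topologizes the set of causal curves from $\Sigma$ to $p$ with a suitable second-countable Hausdorff topology, proves that this curve space is compact, and then obtains the compactness of $J^{-}(p)\cap J^{+}(\Sigma)$ as the continuous image of a compact set. You bypass the curve-space topology entirely: you reduce to the compactness of the spatial slice $K=\Sigma\cap J^{-}(p)$ and establish that by a direct limit-curve argument together with the defining property of $D^{+}(\Sigma)$. The trade-off is that your reduction step imports the diamond compactness $J^{+}(K)\cap J^{-}(\{p\})$ for compact $K$ as a black box; under the paper's Cauchy-surface definition of global hyperbolicity this itself rests on Geroch's equivalence, whose standard proofs again use limit-curve or curve-space arguments. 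So the two approaches share the same analytic engine, but yours applies it at a more elementary level and avoids having to topologize an infinite-dimensional space. If you want to be fully self-contained relative to the paper's definition, you can replace the reduction step by a second limit-curve argument applied directly to sequences in $J^{-}(p)\cap J^{+}(\Sigma)$, using the same crossing-parameter control you already set up for~$K$.
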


The statement of Lemma \ref{lema1} is intuitive by analyzing it in the case of a Minkowski space time, as in this case, the resulting set $J^{-}(p)\cap J^{+}(\Sigma)$ is the intersection of two compact spaces. However, the proof of this statement is not that straightforward for generic globally hyperbolic space times, as it requires to understand the infinite dimensional space of causal curves $C(\Sigma, p)$ connecting $\Sigma$ with $p$ with a given second countable Haussdorf topology and, in particular, to show that it is compact. Details are given in  the books \cite{Wald}-\cite{O Neill}.

\begin{lemma}\label{lema2} For a given globally hyperbolic space time $(M, g)$ with a smooth Cauchy hypersurface $\Sigma$, consider an open set $U\in M$ and a point $q$ such that $J^+(S)\cap J^-(q)\in U$. Given a sequence $q_i\to q$ then $J^+(S)\cap J^-(q_i)\subset U$ for $i\geq i_0$.
\end{lemma}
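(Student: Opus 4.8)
The plan is to argue by contradiction using the compactness provided by Lemma \ref{lema1}. Suppose the conclusion fails. Then after passing to a subsequence (still denoted $q_i$) we may assume $q_i \to q$ but for every $i$ there is a point $x_i \in (J^+(S)\cap J^-(q_i))\setminus U$. The first step is to choose, for each $i$, a causal past-directed curve from $q_i$ through $x_i$ down to $S$; the point $x_i$ lies on such a curve, so $x_i \in J^+(S)\cap J^-(q_i)$. The aim is to extract a convergent subsequence $x_i \to x$ and show $x \in J^+(S)\cap J^-(q)$, which by hypothesis lies in $U$; since $M\setminus U$ is closed and $x_i \in M\setminus U$, the limit $x$ would also lie in $M\setminus U$, a contradiction.

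To carry this out I would first fix a point $q'$ strictly to the future of $q$, lying in $\mathrm{Int}\,D(\Sigma)\setminus I^-(\Sigma)$, and with the property that $q$ and all $q_i$ for $i$ large lie in $J^-(q')$ (possible because $I^-(q')$ is open and contains $q$, so it contains the tail of the sequence). Then every $x_i \in J^+(\Sigma)\cap J^-(q_i) \subseteq J^+(\Sigma)\cap J^-(q')$, and by Lemma \ref{lema1} the set $K := J^-(q')\cap J^+(\Sigma)$ is compact. Hence the sequence $\{x_i\}\subset K$ has a convergent subsequence $x_i \to x \in K$. The remaining step is to verify that $x$ actually lies in $J^+(S)\cap J^-(q)$ and not merely in $K$: one uses the limit-curve lemma (closedness of the causal relation $J^-$ in a globally hyperbolic spacetime, see \cite{Wald}-\cite{O Neill}) applied to the causal curves joining $S$ to $x_i$ and to the causal curves joining $x_i$ to $q_i$. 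Since $x_i\to x$, $q_i\to q$, and the endpoints on $S$ lie in a compact portion of $S$ (contained in $K$), the limit curves exist and witness $x \in J^+(S)$ and $x \in J^-(q)$.

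The main obstacle is the last step: passing to the limit in the causal curves. This requires the fact that in a globally hyperbolic spacetime the relation ``$x\in J^-(y)$'' is closed, and that limit curves of causal curves are themselves causal with the expected endpoints. Establishing this cleanly needs the standard limit curve theorem together with global hyperbolicity (used to guarantee the causal curves do not escape to infinity and that the relevant curve spaces are compact), precisely the machinery invoked in the discussion after Lemma \ref{lema1}. Once that is granted, the contradiction is immediate: $x$ lies in $J^+(S)\cap J^-(q)\subseteq U$ by hypothesis, yet $x=\lim x_i$ with each $x_i\in M\setminus U$, and $M\setminus U$ is closed, so $x\in M\setminus U$. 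This contradiction shows that $J^+(S)\cap J^-(q_i)\subset U$ for all sufficiently large $i$, which is the assertion of the lemma.
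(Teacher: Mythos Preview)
Your proof is correct and follows essentially the same route as the paper's: argue by contradiction, pick a point $q'$ (the paper calls it $p$) strictly to the chronological future of $q$ so that the tail of the sequence $q_i$ lies in $J^-(q')$, invoke Lemma~\ref{lema1} to get the compact set $K=J^-(q')\cap J^+(\Sigma)$, extract a convergent subsequence of the bad points $x_i\to x$ inside $K\setminus U$, and derive a contradiction with $J^+(\Sigma)\cap J^-(q)\subseteq U$. The only substantive difference is that you make explicit the step the paper leaves implicit, namely that $x_i\in J^-(q_i)$, $x_i\to x$, $q_i\to q$ forces $x\in J^-(q)$; you correctly identify this as closedness of the causal relation in a globally hyperbolic spacetime (equivalently, the limit-curve theorem), whereas the paper simply asserts the conclusion. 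Your version is therefore slightly more careful at precisely the point where care is needed.
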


\begin{proof} Consider a point $q$ which belongs to the causal future $J^{+}(\Sigma)$ of $\Sigma$  and such that $J^+(\Sigma)\cap J^-(q)\subseteq U$.  Then, if there is a future directed time like curve $\gamma$ which connects $q$ with another generic point, say $p$, it follows that $q$ is in the interior of $J^{-}(p)$. Furthermore lemma 1 shows that  $J^{-}(p)\cap J^+(\Sigma)$ is a compact set, and this will be exploited to prove the assertion. Consider a sequence of points $q_i$ in $J^+(\Sigma)$ such that $q_i\to q$, then it is not difficult to see that $J^{-}(q_i)\subseteq J^{-}(p)$ when $i>i_0$. This follows from the fact that $q$ is the accumulation point of the sequence, and for $i$ large enough, these points will be in the causal past of $p$, as both $p$ and $q$ are connected by a \emph{time like} curve.  It is also intuitive that  $J^{-}(q_i)\cap J^{+}(\Sigma)\subseteq U$ when $i>i_1$, since we are assuming that $J^+(\Sigma)\cap J^-(q)\subseteq U$. In fact, suppose that there were a subsequence of points $q_l$ such that the corresponding set $J^{-}(q_l)\cap J^{+}(\Sigma)$ contains point $r_l$ which are outside $U$ even for $l$ large. Then these points $r_l$ are located in  $J^{-}(p)\cap J^{+}(\Sigma)-U$, which is a compact set as it is a compact space with a deleted space open space $U$. Thus $J^{-}(p)\cap J^{+}(\Sigma)-U$ contains its accumulation points and the sequence $r_l$ converges to a point $r$. Every point $r_l\in J^{-}(q_l)\cap J^{+}(\Sigma)-U$ and therefore $r\in J^{-}(q)\cap J^{+}(\Sigma)-U$. But $J^+(\Sigma)\cap J^-(q)\subseteq U$ and thus, the last statement is inconsistent. This contradiction shows that, in fact, $J^+(S)\cap J^-(q_i)\subset U$ for $i\geq i_0$ and this is precisely the assertion we wanted to prove.
\end{proof}

\subsubsection{The gluing process}

Equipped with this lemma, let us return now to the gluing process. The following is an adaptation of some arguments of  \cite{ringstrom} to the present situation. For this, let $W_p$ an open neighbour of $p$ such that its closure is $W$. Then consider the manifold $M=\cup_{p}W_p$. Given two sets $W_p$ and $W_q$ there are two fields $u_1$ and $u_2$ which are solutions of the corresponding equations. The harmonic coordinate equation is satisfied in both systems. The initial data also coincide in both systems. The main task is to show that the solutions coincide in $\overline{W}_p\cap \overline{W}_q$. For this one let us define the time interval $I\in [0,\infty)$ for which both solutions coincide in 
$$
S_t=[0,t]\times \Sigma\cap \overline{W}_p\cap \overline{W}_q,
$$
and also such that
\begin{equation}\label{intersecto}
J_p^{-}(x)\cap J_p^{+}(\Sigma)=J_q^{-}(x)\cap J_q^{+}(\Sigma),
\end{equation}
for any $x\in S_t$. The strategy is to prove that $I$ is open and closed, and non empty, thus it is the full interval where both solutions are defined. To prove that is not empty is immediate. The initial conditions coincide, then $I$ contains the point $t=0$, which is enough to show that it is not empty.
On the other hand, the set (\ref{intersecto}) is compact due to the lemma 1, and a bit of reasoning implies that $I$ should be closed as well.
Next, one may prove that $I$ is open. For this, let $x=(t,r)$ be such that $J_p^{-}(x)\cap J_p^{+}(\Sigma)\subseteq W_p\cap W_q$. Then the  lemma 2 can be applied with the open subset $W_p\cap W_q$ playing the role of $U$. From this lemma, it is concluded that for the point $x+\delta x=(t+\epsilon, r +\delta(\epsilon))$ one has that $J_p^{-}(x+\delta x)\cap J_p^{+}(\Sigma)\subseteq W_p\cap W_q$ if $\epsilon$ is small enough. Thus, the space time extends to the point $x+\delta x$. However, this conclusion does not warrant that the two fields $u_1$ and $u_2$ coincide for $x+\delta x$. To prove that this is indeed the case, take the point $t$ as a new initial condition. This is valid, since the solutions $u_1$ and $u_2$ are known to coincide by our assumption up to $t$. Then take the difference between the two solutions $u_1$ and $u_2$ at $t$. This difference $u_1-u_2$ is zero a $t$ and their derivatives are also zero at $t$. Thus, by general results about quasi linear hyperbolic systems, it follows that $u_1-u_2=0$ up to the time where this solutions exist. Thus $u_1=u_2$ up to 
$$
J_p^{-}(t+\epsilon, r+\delta(\epsilon))\cap J_p^{+}(\Sigma)=J_q^{-}(t+\epsilon, r+\delta(\epsilon))\cap J_q^{+}(\Sigma).
$$
This implies that, given a time $t\in I$, then $[t, t+\epsilon]\subseteq I$ for $\epsilon$ small enough. This means that $I$ is an open set. 

From the paragraph given above, it is clear that $I$ is empty, open and closed, thus $I=[0,\infty)$. This shows that the constructed metric glues properly on $M=\cup_{p}W_p$, up to a point where a singularity appears, or at all times if the universe is future eternal.


\begin{thebibliography}{plain}

\bibitem{volo1}
F. Klinkhamer and G. Volovik,
Phys. Rev. D 77 (2008) 085015.

\bibitem{volo2}
F. Klinkhamer and G. Volovik,
Phys. Rev. D 78 (2008) 063528.


\bibitem{volo3}
F. Klinkhamer and G.Volovik, JETP Lett.  88 (2008) 289.

\bibitem{volos} F. Klinkhamer and G. Volovik Mod. Phys. Lett A 31, 28 (2016) 1650160.
\bibitem{volo4}
F. Klinkhamer and G. Volovik,
JETP Lett. 91 (2010) 259.

\bibitem{vol} F. Klinkhamer and G. Volovik, Phys. Rev. D 79 (2009) 063527.

\bibitem{volo5}
F. Klinkhamer and G. Volovik,
J. Phys. Conf. Ser. 314 (2011) 012004.

\bibitem{marsden}
J. Marsden A. Fischer Comm. Math. Phys 28 (1972) 1.
\bibitem{Taylor}
M. Taylor. Partial Differential Equations Vol 3. \textit{Nonlinear Equations} Springer, 2010.
\bibitem{courant} R. Courant and D. Hilbert \textit{Methods of Mathematical Physics} Vol. 2 Wiley-VCH 1991.

\bibitem{ringstrom}
H. Ringstrom \textit{The Cauchy Problem in General Relativity} European Mathematical Society, 2000.


\bibitem{acelero1} A.G. Riess et al. Astron. J., 116, 1009-1038, (1998),
\bibitem{acelero2} S. Perlmutter et al. Astrophys. J., 517, 565-586, (1999).

\bibitem{acelero3} S. Perlmutter et al. Nature, 391, 51-54 (1998).

\bibitem{dm1} V. Rubin and W. Ford, Astrophysical Journal 159 (1970) 379.
\bibitem{dm2} V. Rubin, D. Burstein, W. Ford Jr and N. Thonnard, Astrophysics J. 289 (1985) 81.
\bibitem{dm3} S. Carroll, W. Press and E. Turner, Annu. Rev. Astron. Astrophys. 30 (1992) 499.
\bibitem{dolgin}  A. Dolgov  {The very early universe} Eds. G. Gibbons, S. Hawking, and
S. Tiklos Cambridge University Press 1982.

\bibitem{weinberg}
S. Weinberg, Rev. Mod. Phys.  61 (1989) 1.

\bibitem{urbano1} A. Dolgov and F. Urban Phys. Rev. D 77 (2008) 083503.

\bibitem{quinte}  S. Carroll Phys. Rev. Lett. 81 (1998) 3067.





\bibitem{dolgo1}
A. Dolgov JETP Lett.  41 (1985) 345.

\bibitem{dolgo2}
A. Dolgov Phys. Rev.  D 55 (1997) 5881. 

 \bibitem{bjorken1} J. Bjorken Annals Phys. 24, 174 (1963).

\bibitem{bjorken2} P. Kraus and E. Tomboulis Phys. Rev. D 66, 045015 (2002).

\bibitem{rubak}
V. Rubakov and P. Tinyakov Phys. Rev.  D61 (2000) 087503.


\bibitem{emelo0} V. Emelyanov and F Klinkhamer
 Phys. Rev. D 85, 063522 (2012).
 
\bibitem{emelo1}
V. Emelyanov and F Klinkhamer
Int. J. Mod. Phys.  D 21 (2012) 1250025.


\bibitem{emelo2}
V. Emelyanov and F. Klinkhamer
Phys. Rev. D 85, 103508 (2012)

\bibitem{emelo3}
F. Klinkhamer
Phys. Rev. D 85 (2012) 023509.

\bibitem{emelo4} V. Emelyanov and F. Klinkhamer
 Phys. Rev. D 86, 027302 (2012).
 
  \bibitem{scorna} O. Santillan and M. Scornavacche JCAP 10 (2017) 048.
 
 \bibitem{calo} F. Calogero Phys. Lett. A 238 (1997) 335.
 
 \bibitem{estr} J. Estrada Vigil and L. Masperi Mod. Phys. Lett. A 13 (1998) 423.
 
\bibitem{fri1} J. Frieman, C. Hill and R. Watkins Phys. Rev. D 46 (1992) 1226.
\bibitem{fri2} C. Hill and G. Ross Nucl. Phys. B 311 (1988) 253. C. Hill and G. Ross Phys. Lett. B 203 (1988)
125.


 
 \bibitem{gaba} L. Gabbanelli and O. Santillan Mod. Phys. Lett. A Vol. 31 No. 25 (2016) 1650143.
 
\bibitem{vignolo1} S. Capozziello and S. Vignolo  Class. Quant. Grav. 26 (2009) 175013. 

\bibitem{vignolo2}  S. Cappozziello and S. Vignolo  Int. J. Geom. Meth. Mod. Phys 8 (2011) 167.


 \bibitem{friedrich} Friedrich, H. and Rendall, A. D. 2000 \textit{The Cauchy problem for the
Einstein equations.} In B. G. Schmidt (ed) Einstein’s field equations
18 and their physical implications, Lecture Notes in Physics 540. Springer,
Berlin.

\bibitem{friedrich1} A. Rendall Class .Quant. Grav. 23 (2006) 1557.



\bibitem{friedrich2} A. Alho, F. Mena, J. Valiente Kroon,  Advances in Theoretical and Mathematical Physics. 21 (2017) 857.
\bibitem{friedrich3}  D. Pugliese, J. Valiente Kroon, Gen. Rel. Grav. 45 (2013) 1247.
 \bibitem{reall} H. Reall and G. Papallo Phys. Rev. D 96 (2017) 044019.
 \bibitem{RelaitividadChoquet}
	Y. Choque-Bruhat. \textit{General Relativity and the Einstein Equations} Oxford Mathematical monographs 2009.

\bibitem{leray} J. Leray \textit{Hyperbolic Differential Equations} Institute for Advanced Study 1955.
\bibitem{foures} Y. Choquet-Bruhat Acta Math. 88 (1952) 141.

 
 \bibitem{Wald} R. Wald \textit{General Relativity} Chicago University Press 1984.
 
 \bibitem{hawking} S. Hawking and \textit{The Large Scale Structure of the Space-Time} Cambridge Monographs on Mathematical Physics 1973.
 
 \bibitem{Ehrlich} J. Beem, P. Ehrlich and K. Easley \textit{Global Lorentzian Geometry} CRC press 1981.
 
 \bibitem{O Neill} B. O Neill \textit{Semi-Riemannian Geometry with Applications to General Relativity} Academic Press 1983.
 
 
\end{thebibliography}
\end{document}